\newcommand{\f}{\mathbb F}
\newcommand{\h}{\mathbf h}
\newcommand{\x}{\mathbf x}
\newcommand{\y}{\mathbf y}
\newcommand{\s}{\mathbf s}
\newcommand{\ALPHA}{\boldsymbol \alpha}
\newcommand{\BETA}{\boldsymbol \beta}
\newcommand{\HH}{\mathbf H}
\newcommand{\C}{\EuScript C}
\newcommand{\un}{\boldsymbol 1}
\newcommand{\supp}{\rm supp}
\newcommand{\NP}{\rm NP}
\newcommand{\RP}{\rm RP}
\newcommand{\ZPP}{\rm ZPP}
\newcommand{\coRP}{\rm coRP}
\def\rank#1{{\rm rank}{ (#1)}}
\newtheorem{lemma}{Lemma}
\newtheorem{definition}[lemma]{Definition}
\newtheorem{corollary}[lemma]{Corollary}
\newtheorem{theorem}[lemma]{Theorem}
\newtheorem{prop}[lemma]{Proposition}
\newenvironment{proof}{\noindent {\it Proof.~~}\ }{\  $\square$\medskip}
\title{On the hardness of the decoding and the minimum distance problems for rank codes}
\author{Philippe Gaborit\thanks{XLIM, Universit\'e de Limoges,
123, Av. Albert Thomas,
 87000 Limoges, France. {\tt gaborit@unilim.fr}}
\and
Gilles Z\'emor\thanks{Universit\'e de Bordeaux , Institut de Math\'ematiques
de Bordeaux, 351 cours de la Lib\'eration, 33405 Talence.
{\tt
zemor@math.u-bordeaux1.fr} }
}
\begin{document}

\maketitle

\begin{abstract}
In this paper we give a randomized reduction for the Rank Syndrome Decoding problem
and Rank Minimum Distance problem for rank codes. Our results are based on an embedding
from linear codes equipped with Hamming distance unto linear codes over an extension
field equipped with the rank metric.
We prove that if both previous problems for rank metric
are in ZPP = RP$\cap$coRP, then we would have NP=ZPP. 
We also give complexity results for the respective approximation problems in rank metric. 
\end{abstract}
\section{Introduction}

\subsection{General presentation}
The syndrome decoding problem for Hamming distance is a fundamental problem in complexity theory,
which gave rise to many papers over more than 30 years, since the seminal paper
of Berlekamp, McEliece and van Tilborg \cite{BMT78}, who proved the NP-completeness of the problem.
The problem of decoding codes is of first importance regarding applications, in particular 
for information theory and also for its connections with lattices.

Besides the notion of Hamming distance for error-correcting codes and the notion of Euclidean distance
for lattices, the concept of rank metric was introduced in 1951 by Loo-Keng Hua \cite{Loo51}
as "arithmetic distance" for matrices over a field $\f_q$.
Given two $n \times n$ matrices $A$ and $B$ over a finite field $\f_q$, the rank distance
between $A$ and $B$ is defined as $d_R(A,B)=Rank(A-B)$. In 1978, Delsarte defined in \cite{Del78}
the notion 
of rank distance on a set of bilinear form (which can also be seen as the set of rectangular matrices)and proposed a construction of optimal ${\it matrix \; codes}$ in bilinear form representation.
A matrix code over $\f_q$ for the rank metric is defined as the set of 
$\f_q$-linear combinations of a set $\mathcal{M}$ of
$m \times n$ matrices over $\f_q$.
Such codes are linear over $\f_q$ and the number $k$ of independent matrices in $\mathcal{M}$,
is bounded from above by $nm$. 
Then in 1985, Gabidulin  introduced in \cite{Gab85} 
the notion of rank codes in {\it vector representation} (as opposed to
{\em matrix} representation) over
an extension field $\f_Q$ of $\f_q$ (for $Q=q^m$). A rank code $C[n,k]$ of length $n$ and
dimension $k$ over $\f_Q$ in vector representation
is defined as a subspace over $\f_Q$ of dimension $k$ of $\f_Q^n$.   
It is possible to associate to any vector $x$ of $\f_Q^n$ an $m \times n$ matrix $X$ over $\f_q$
in the following way: let $x=(x_1,\cdots,x_n)$  in $\f_Q^n$
and let $\mathcal{B}$ be a basis of $\f_Q$ over $\f_q$. One can write any $x_i$ 
of the extension field $\f_Q$,
in the $\f_q$-linear basis $\mathcal{B}$, as a column vector $(x_{i1},\cdots,x_{im})^t$ 
of $\f_q^m$, so that one can associate an $m \times n$ matrix $X$ over $\f_q$
to any $x$ in $\f_Q^n$. The rank weight of $x$ is then defined as $w_R(x)=rank(X)$ 
and the rank distance
between $x$ and $y$ in $\f_Q^n$, is defined as $d_R(x,y)=rank(X-Y)$. 
Rank codes in vector representation can be seen as classical error-correcting codes over $\f_Q$ 
but embedded with the rank metric rather 
than with the Hamming metric, and 
one can define standard notions like generator and parity check matrices. 
Naturally any rank code $C[n,k]$ in vector representation is $\f_Q$-linear and can be seen
as a matrix code defined with $k \times m$ matrices over $\f_q$, but the converse is not true
and any rank matrix code has not, in general, a vector representation.
The vector representation is interesting because such codes are more compact to describe
and to handle.  In the following we will simply denote by {\it rank code}, 
a rank code in vector representation.

In 1985 \cite{Gab85},
 Gabidulin introduced an optimal class of rank codes (in vector representation): the 
so-called Gabidulin codes, which are evaluation codes, analogous to Reed-Solomon codes
but in a rank metric context, where monomial of the form $x^p$ are replaced by linearized
monomial of the form $x^{q^p}$ introduced by Ore in 1933 in \cite{Ore33}.

By analogy with the Hamming distance it is possible to define the two following problems:

{\bf Rank Syndrome Decoding problem (RSD)}\\
\begin{tabular}{ll}
  {\em Instance:} & a $(n-k)\times n$ matrix $H$ over $\f_Q^n$, a syndrome $s$ in $\f_Q^{n-k}$ and 
an integer $w$\\
  {\em Question:} & does there exist $x \in \f_Q^n$ such that
  $H.x^t=s$ and $w_R(x)\leq w$~?
\end{tabular}

and  

{\bf Rank Minimum Distance Problem (RMD)}\\
\begin{tabular}{ll}
  {\em Instance:} & a rank code $C[n,k]$, an integer $w$, \\
  {\em Question:} & does there exist $\x\in C$ such that
  $w_R(x)\leq w$~?
\end{tabular}

{\bf Remark:} the two previous problems fundamentally differ from the so-called
MinRank problem, which is also related to the rank metric but in a more general
case as it is explained in the next section.

The purpose of this paper is to study the computational complexity of the RSD problem
and the RMD problem, our main result reads as follows:

\begin{theorem}
  If the Rank Minimum Distance Problem for rank codes is in
  $\ZPP = \RP \cap \coRP$, then we must have $\NP=\ZPP$. Similarly, if the
  Rank Syndrome Decoding Problem for rank codes is in $\ZPP$, we must
  have $\NP=\ZPP$.
\end{theorem}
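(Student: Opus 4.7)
The plan is to give a randomized polynomial-time reduction from a problem known to be $\NP$-complete in the Hamming metric to each of RMD and RSD. The natural choices are the Minimum Distance Problem (shown $\NP$-complete by Vardy) and the Syndrome Decoding Problem (shown $\NP$-complete by Berlekamp, McEliece and van Tilborg \cite{BMT78}). Once such reductions are available, the hypothesis RMD $\in \ZPP$ (resp.\ RSD $\in \ZPP$) transfers to the corresponding $\NP$-complete Hamming problem, so every language in $\NP$ lies in $\ZPP$; combined with the trivial inclusion $\ZPP\subseteq\NP$, this forces $\NP=\ZPP$.

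The structural ingredient, as advertised in the abstract, is a polynomial-time, randomized embedding of a linear code $C$ over $\f_q$ into a linear code $C'$ over an extension $\f_Q$ with $Q=q^m$ and $m$ polynomial in the input size. Concretely, one produces a rank-metric parity-check matrix $H'$ for $C'$ starting from the Hamming parity-check $H$ of $C$, by sampling random elements of $\f_Q$, and (in the SD case) one suitably extends the syndrome $s$. The embedding is engineered to preserve the target weight: a Hamming codeword (or error) of weight $w$ is sent to a rank codeword (or error) of rank weight exactly $w$, which gives the easy direction of the equivalence between the two instances.

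The delicate direction is the converse. A rank solution $e\in\f_Q^{n'}$ of weight $\leq w$ admits a decomposition $e=\sum_{i=1}^w\beta_i v_i$ with $v_i\in\f_q^{n'}$ and $\beta_i\in\f_Q$ linearly independent over $\f_q$. The construction is designed so that, with high probability over the random entries used in $H'$, the parity equations $H'e^T=s'$ decouple into $\f_q$-linear constraints on the individual vectors $v_i$, forcing each of them to satisfy the input Hamming instance; a Hamming solution of weight $\leq w$ is then read off in polynomial time from some nonzero $v_i$.

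The main obstacle is the probabilistic analysis of this converse: the $\f_Q$-randomness in $H'$ must avoid all algebraic coincidences that would permit a low-rank codeword of $C'$ \emph{not} coming from a low-Hamming-weight codeword of $C$. One bounds the probability of each bad coincidence---typically an unwanted $\f_q$-linear relation among products of the random entries of $H'$ with the unknown $\beta_i$'s of a putative spurious solution---and applies a union bound over the polynomially many configurations to be excluded. Choosing $m$ sufficiently larger than $w$ makes the total failure probability bounded away from $1$, which suffices to turn the construction into a randomized reduction compatible with $\ZPP$. The argument is structurally identical for RSD and RMD, differing only in whether the syndrome is preserved or the target is a nonzero codeword of bounded weight.
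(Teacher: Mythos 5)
Your overall strategy---a randomized weight-compatible embedding of a Hamming instance into a rank-metric instance over an extension field $\f_Q$, analysed by bounding the probability of spurious low-rank solutions---is indeed the paper's approach (the embedding is $x\mapsto(x_1\alpha_1,\ldots,x_n\alpha_n)$ for random $\alpha_i\in\f_Q$, with $m$ linear in $n$ for the minimum distance problem and quadratic for decoding). However, there is a genuine gap at the final step, where you assert that the hypothesis $\mathrm{RMD}\in\ZPP$ simply ``transfers'' through the reduction. The reduction is one-sided (``unfaithful''): a yes-instance of the Hamming problem always maps to a yes-instance of the rank problem, but a no-instance maps to a no-instance only with high probability. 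Composing such a reduction with a zero-error algorithm does not yield a zero-error algorithm; the direct composition only gives $\NP\subseteq\coRP$. To also obtain $\NP\subseteq\RP$, the paper must do additional work: assuming the rank problem is in $\RP$, it runs a witness-constructing self-reduction, repeatedly shortening the Hamming code column by column and querying the rank oracle on a fresh random embedding at each step, until it isolates a set of $w$ coordinates supporting a nonzero codeword, which is then verified deterministically so that a final ``yes'' is never erroneous. Your proposal contains no mechanism for this direction, and without it (or an explicit complexity-theoretic argument that $\NP\subseteq\coRP$ already forces $\NP=\ZPP$) the stated conclusion does not follow.

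A secondary issue concerns your probabilistic analysis of the converse. You claim that the random construction forces each $v_i$ in a decomposition $e=\sum_i\beta_iv_i$ to individually satisfy the Hamming instance, and that only polynomially many bad configurations need to be excluded. Neither holds: a low-rank codeword of the embedded code need not arise from Hamming codewords of $C$ at all (the paper exhibits a rank-one $\f_Q$-combination in the even-weight code of length $3$ for a suitably unlucky $\ALPHA$ of full rank), and the union bound must run over exponentially many candidate vectors of prescribed Hamming weight and small rank weight. It is beaten only because each such vector is a codeword with probability $Q^{-|J|}$, where $|J|$ is bounded below via the Griesmer bound applied to a shortened code---which is precisely why $m$ must grow at least linearly in $n$, not merely exceed $w$.
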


\subsection{ Previous work}
Surprisingly the theoretical computational complexities of the RSD and
RMD problems for the
rank metric are not known,
whereas the problem (and its variations) has been intensively studied for Hamming distance
or for lattices.
In particular besides the NP-completeness of the syndrome decoding problem for Hamming distance
proven in \cite{Var97}, the minimum distance problem for Hamming distance has been
proven NP-complete by Vardy in \cite{Var97}, as are also variations on the problem
\cite{CW09}.

As explained earlier in this introduction it is possible 
to consider the decoding and minimum distance problems in the rank
metric, but for {\em matrix} codes.
These
problems can be seen as generalizations of the RSD and RMD problems.
For instance for the case of the decoding problem for rank matrix codes,
we are given a set $\mathcal{M}=\{M_1,\cdots,M_k\}$ of $n \times n$
matrices over $\f_q$, a matrix $M$ over $\f_q$, and an integer $w$. The question
is to decide whether there exists an $\f_q$-matrix $M_0$ of rank $\leq
w$ such that $M-M_0$ can be expressed as an $\f_q$-linear combination
of matrices of $\mathcal{M}$ (i.e. is in the $\f_q$-linear matrix code
generated by the matrices of $\mathcal{M}$).
Note that we have linearity over
the small field $\f_q$ for the code, but not necessarily over the extension field $\f_Q$.
The latter decoding problem and its minimum distance variant  have appeared,
in slightly generalized forms, 
somewhat confusingly both under the name of ``MinRank'' in the literature.
Courtois
makes the observation in \cite{Cou01} that both the above problems for rank codes
in matrix representation are
NP-complete, by remarking that a Hamming metric code in $\f_q^n$ can
be ``lifted'' into a rank metric code in matrix representation simply by transforming any
vector $x$ of $\f_q^n$ into a diagonal matrix with $x$ written on
the diagonal. By this process a Hamming code of dimension $k$ with minimum distance $d$
is
lifted unto a rank-metric code in matrix representation with $\mathcal{M}$
a set with $k$ matrix element, with rank minimum distance $d$. This
transformation yield the NP-completeness of the previous decoding problem for matrix codes
from the NP-completeness of Syndrome Decoding problem for
classical Hamming codes. The NP-completeness of the associated Minimum Rank Distance problem
follows similarly from the NP-completeness of minimum distance problem for Hamming metric.

However, in the case of matrix representations the structure of the linear matrices
over $\f_q$ is simpler than the structure for rank $[n,k]$ codes in vector representation
which are linear
over the extension field $\f_Q$ and not only on the base field $\f_q$.
The "MinRank" problem
appears as a weakly-structured variation of the RSD and RMD problems.
The above remark by Courtois
works well for $\f_q$-linear matrix codes but clearly does not apply
for $\f_Q$-linear $[n,k]$ rank codes.

\subsection{Applications of the rank metric}

Over the years the notion of rank metric has become a very central tool for new applications of coding
theory and has also very interesting applications to cryptography.

{\bf Applications to coding theory.} Concerning coding theory, from the end of '90s new application contexts appeared for coding theory: 
space-time coding \cite{TSC98} in 1997 and network coding in 2001 \cite{LYC03}.

Space time coding was introduced by Tarokh, Jafarkhani and Calderbank
in 1998 in \cite{TSC98}. One strives to improve the reliability of
data transmission in wireless communication systems using multiple
transmission antennas. This redundancy results in a higher chance of
being able to use one or more of the received copies to correctly
decode the received signal. In fact, space–time coding combines all
the copies of the received signal in an optimal way to extract as much
information from each of them as possible. A full rank criterion was
proposed for choosing rank matrices  with full rank difference, which
enables one to decode errors in this context.


For network coding introduced in 2001 in \cite{LYC03}, the idea is optimize information sent 
in given time slots, when the information is sent from a single source to a single destination
through a network with nodes which send random linear combination of received information.
Koetter and Kschichang introduced in 2007 \cite{KK08} the notion of subspace metric (which
is a small variation on the rank metric \cite{GP08}),
and the so-called Koetter-Kschichang codes which are an adaptation of the
Gabidulin codes in a subspace metric context. 

More generally a lot of work has also been done for decoding Gabidulin or Koetter-Kschichang
codes, though admittedly somewhat less than for Reed-Solomon codes,
their Hamming distance counterparts: in particular list-decoding algorithms
are known only for subclasses of Gabidulin codes and not yet for the general family of codes \cite{GX13,GNW12,MV13}.

{\bf Applications to cryptography} 

Rank-based cryptography belongs to the larger class of post-quantum cryptosystems,
which is an alternative class of cryptosystems which are {\it a priori} resistant to a putative
quantum computer. The first cryptosystem was proposed in  
1991 by Gabidulin, Paramonov and Tretjakov 
(the GPT cryptosystem \cite{GPT91} which adapts the McEliece
cryptosystem to the rank metric and Gabidulin codes).





The particular interest of rank metric based problems compared to lattices or (Hamming) codes based
problems is that the practical complexity of the best known attacks for rank-based problems \cite{GRS12} grows  very quickly
compared to their Hamming counterpart \cite{Jou12}.
Indeed such attacks have a quadratic term (related to parameters of the rank code)
in their exponential coefficient,
 while for Hamming distance problems ( and somehow also for heuristic LLL attacks for lattice-based cryptographic) ,
 the best practical attacks 
have only an exponential term whose exponent is linear
in the code parameters. This translates into rank codes having
a decoding complexity that behaves as $exp(\Omega(N^{2/3}))$ rather than
$exp(\Omega(N^{1/2}))$
for Hamming codes, where $N$ is the input size, i.e. the number of $q$-ary symbols needed to
describe the code.




In practice it means that it is possible to obtain secure practical parameters for random
instances in rank metric 
of only a few thousand bits related to a hard problem, when at least a
hundred thousand bits are 
needed for Hamming distance or for lattices.
Such random instances for rank metric are used for instance, for zero-knowledge authentication 
in \cite{GSZ11}, and weakly structured instances are used for the recent LRPC cryptosystem \cite{GMRZ13} 
(similar to the NTRU cryptosystem \cite{ntru} for lattices and the recent MDPC cryptosystem for codes)
or for signature \cite{ranksign}.
Of course with (Hamming) codes and lattices it is possible
to decrease the size of parameters to a few thousand bits 
with additional structure \cite{BCGO09,LPR13},
but then the reduction properties to hard problems are lost because they are reduced to decoding problems for
  special classes of codes whose complexity is not known..

Overall because of the practical complexity of best known attacks, rank-based cryptography
has very good potential for cryptography, furthermore, our
    present results show that cryptographic protocols whose security
    can be reduced to the decoding problem for rank codes will have
    both reduction to a proven hard problem and the potential for
    small keys. Finally, we remark that since the codes actually used
    for rank-metric applications, cryptographic or otherwise, tend to
    be rank-codes in the sense of this paper, i.e. with linearity over
    the large field, the decoding and minimum distance problems for
    these codes are more relevant than the same problems for the
    looser matrix code class, whose NP-completeness has been referred
    to a number of times in the past.

{\bf Organization of the paper:} the paper is organized as follows, in Section 2, we give an overview
of our results and describe our embedding technique, in Section 3 we give  
a probabilistic analysis of our reduction setting, Section 4 describes our main results,
and finally Section 5 considers further results on approximation
problems for the rank metric.

\section{Overview}

It is clear that Courtois's diagonal embedding of the Hamming space
into the rank metric space works well for rank codes in matrix form linear over $\f_q$
but does not work for rank codes with linearity over the extension field $\f_Q$.
We shall therefore introduce a different embedding strategy defined as follows:

\begin{definition}
  Let $m\geq n$ and $Q=q^m$. Let $\ALPHA = (\alpha_1,\ldots \alpha_n)$
  be an $n$-tuple of elements of $\f_Q$. Define the embedding of
  $\f_q^n$ into $\f_Q^n$ 
  $$\begin{array}{lcll}
       \psi_{\ALPHA}: & \f_q^n & \rightarrow & \f_Q^n\\
             & x=(x_1,\ldots ,x_n) & \mapsto & \x =(x_1\alpha_1,\ldots x_n\alpha_n)
  \end{array}$$
  and for any $\f_q$-linear code $C$ in $\f_q^n$, define
  $\C=\C(C,\ALPHA)$ as the $\f_Q$-linear code generated by
  $\psi_{\ALPHA}(C)$, i.e. the set of $\f_Q$-linear combinations of
  elements of $\psi_{\ALPHA}(C)$.
\end{definition}

{\bf Remark:} The condition $m\geq n$ ensures that, by adjoining $m-n$
zeros to vectors of $\f_Q^n$, they may be seen as $m\times m$ matrices
so that the code $\C$ may be seen as a rank code.

It should be clear that for any $\ALPHA$, the rank weight of
$\psi_{\ALPHA}(x)$ is at most the Hamming weight of $x$: therefore the
Minimum Rank Distance of $\C$ never exceeds the Hamming minimum
distance of the original code $C$. It may however be less. For
example, if $\ALPHA = \un =(1,1,\ldots ,1)$ then the rank weight of
$\psi_{\ALPHA}(x)$ is always $1$ for every $x\neq 0$. The minimum rank weight
of $\C(C,\ALPHA)$ may also be less than the minimum Hamming distance
of $C$ for more sophisticated reasons. In particular, if
$\alpha_1,\ldots \alpha_n$ are $\f_q$-linearly independent, we have
that the rank weight of $\psi_{\ALPHA}(x)$ is always equal to the
Hamming weight of $x$, but the minimum rank weight of $\C$ may still
be less than the minimum Hamming distance of $C$. Consider for
instance the binary code $C$ of words of even weight of length $3$, we
have that 
$$\x = \alpha_2\psi_{\ALPHA}(101)+\alpha_1\psi_{\ALPHA}(011) = 
(\alpha_1\alpha_2,\alpha_1\alpha_2,\alpha_3(\alpha_1+\alpha_2)).$$
Now if $\alpha_3$ happens to have been chosen equal to
$\alpha_1\alpha_2(\alpha_1+\alpha_2)^{-1}$, we will have that
$\rank{x}=1<d_{\rm Hmin}(C)$ even though $\ALPHA$ may very well be of
rank $3$.

If, given any Hamming code $C$, we could efficiently find an $n$-tuple
$\ALPHA$ that would guarantee that $C(C,\ALPHA)$ has minimum rank
distance equal to $d_{\rm Hmin}(C)$, we would have a polynomial
reduction that would derive the NP-completeness of the Minimum Rank
Distance problem for rank codes from the NP-completeness of
the classical minimum Hamming distance problem. We have not been able
to see how to do this in any deterministic way. However, we shall show
that when $\ALPHA$ is chosen {\em randomly}, for $m=O(n)$, then we
probability tending to $1$ we have $d_{\rm Rmin}(\C(C,\alpha)) =
d_{\rm Hmin}(C)$. This makes the Rank Minimum Distance hard for NP
under unfaithful random reductions (UR reductions, in the terminology of
\cite{Jo90}). As a consequence we have that if the Rank Minimum
Distance Problem were in coRP we would have NP $\subset$ coRP. With a further
transformation we shall obtain that if the Rank Minimum
Distance Problem were in RP then we would have also NP $\subset$ RP:
our results will therefore show that if the Rank Minimum
Distance Problem were in ZPP = RP$\cap$coRP, then we would have
NP=ZPP.

\section{Probabilistic analysis of our embedding }

\subsection{Notation and definitions}

We refer to \cite{MS77} and \cite{Loi06} for general results on codes
and rank codes.
Let $\f_q$ be a field with $q$ elements and let $\f_Q$, with $Q=q^m$, be an extension of $\f_q$
of degree $m$. In the following we consider two type of codes, codes with Hamming distance
considered as $C[n,k,d_H]$ linear codes over the base field $\f_q$, for $n$ and $k$ the length
and dimension of the code and $d_H$, its minimum Hamming distance. we also consider
 rank codes with rank distance written as $C[n,k,d_R]$ linear codes over the field $\f_Q$ of length
$n$, dimension $k$ an minimum rank distance $d_R$, embedded
with the rank metric.

We recall the Griesmer bound for linear codes over $\f_q$ that ill be useful for our proofs:

\begin{prop}[Griesmer bound]
Let $C$ be a $[n,k,d]$ over $\f_q$ then
$$ n \ge \sum_{i=0}^{k-1} \lceil \frac{d}{q^i} \rceil$$
\end{prop}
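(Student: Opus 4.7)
I would prove the Griesmer bound by induction on the dimension $k$. The base case $k=1$ is immediate: any nonzero codeword has weight at least $d$, so $n \ge d = \lceil d/q^0 \rceil$.

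For the inductive step, given a $[n,k,d]$ code $C$, I pick a codeword $c$ of weight exactly $d$, let $S = \supp(c)$, and form the punctured (``residual'') code $C' = \pi(C)$, where $\pi \colon \f_q^n \to \f_q^{n-d}$ deletes the coordinates in $S$. The goal is to prove that $C'$ is a $[n-d,\,k-1,\,d']$ code with $d' \ge \lceil d/q \rceil$. Once this is in hand, applying the inductive hypothesis to $C'$ gives
$$
n - d \;\ge\; \sum_{i=0}^{k-2} \Big\lceil \lceil d/q \rceil / q^i \Big\rceil \;=\; \sum_{i=0}^{k-2} \lceil d/q^{i+1} \rceil \;=\; \sum_{j=1}^{k-1} \lceil d/q^j \rceil,
$$
using the identity $\lceil \lceil x \rceil / m \rceil = \lceil x/m \rceil$ for positive integer $m$. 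Adding $d = \lceil d/q^0 \rceil$ on both sides closes the induction.

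Two ingredients remain. For the dimension of $C'$, observe that $\ker\pi$ consists of those codewords of $C$ supported on $S$; any nonzero such codeword has weight at most $|S|=d$ but at least $d$ by the minimum distance hypothesis, hence weight exactly $d$ with support exactly $S$. Restricted to $S$, the kernel is therefore a constant-weight code of length $d$ and minimum distance $d$, so the Singleton bound forces $\dim \ker\pi \le 1$. Since $c \in \ker\pi$, we conclude $\dim\ker\pi = 1$ and $\dim C' = k-1$. For the minimum distance of $C'$, take any nonzero $\pi(c') \in C'$. Then $c' \notin \ker\pi = \text{span}(c)$, so $c' - \alpha c \ne 0$ for every $\alpha \in \f_q$, giving $w(c' - \alpha c) \ge d$. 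Splitting this weight over $S$ and its complement yields $w(\pi(c')) + (d - n_\alpha) \ge d$, where $n_\alpha = |\{i \in S : c'_i = \alpha c_i\}|$, and hence $w(\pi(c')) \ge n_\alpha$. Since $c_i \ne 0$ on $S$, each $i \in S$ contributes to exactly one $n_\alpha$, so $\sum_{\alpha\in\f_q} n_\alpha = d$; by averaging, $\max_\alpha n_\alpha \ge d/q$, and integrality lifts this to $\lceil d/q \rceil$.

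The main obstacle I anticipate is the dimension count for $C'$: a priori $\ker\pi$ could exceed $\text{span}(c)$ if $C$ contained several linearly independent weight-$d$ codewords all supported on $S$, collapsing $\dim C' < k-1$ and breaking the induction. The key trick that rules this out is precisely the Singleton bound applied to the restriction of $\ker\pi$ to $S$. Once this rigidity is in place, the rest of the argument is the clean averaging computation above.
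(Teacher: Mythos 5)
Your proof is correct and complete: it is the standard residual-code argument for the Griesmer bound, with the two delicate points (the dimension of the kernel of the puncturing map via Singleton, and the averaging argument giving $d' \ge \lceil d/q\rceil$) handled properly. Note that the paper itself offers no proof of this proposition --- it is recalled as a classical fact and used as a black box --- so there is nothing to compare against; your write-up simply supplies the missing classical argument.
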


\subsection{Probabilistic analysis of $\C(C,\ALPHA)$}

\begin{lemma}\label{lem:dual}
  Let $C^\perp$ be the dual code of $C$ over $\f_q$. Let $\BETA =
  \ALPHA^{-1} = (\alpha_1^{-1},\ldots ,\alpha_n^{-1})$. Then
  $\C(C^\perp,\BETA)$ is the dual code of $\C(C,\ALPHA)$ over $\f_Q$
  and $\dim_{\f_Q}\C(C,\ALPHA) = \dim_{\f_q} C$.
\end{lemma}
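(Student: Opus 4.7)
The plan is to split the lemma into its two assertions and prove the dimension statement first, then use it to pin down the dual via a bilinear pairing computation.

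For the dimension statement, I would start with an $\f_q$-basis $c_1,\ldots,c_k$ of $C$, where $k=\dim_{\f_q}C$, and show that the images $\psi_{\ALPHA}(c_1),\ldots,\psi_{\ALPHA}(c_k)$ remain linearly independent over the extension $\f_Q$. Suppose a relation $\sum_j \lambda_j \psi_{\ALPHA}(c_j) = 0$ with $\lambda_j \in \f_Q$. Reading this coordinate by coordinate gives $\alpha_i \sum_j \lambda_j c_{j,i} = 0$ for every $i$; since $\BETA$ exists we have $\alpha_i \neq 0$, so $\sum_j \lambda_j c_j = 0$ in $\f_Q^n$. Expanding each $\lambda_j$ in an $\f_q$-basis of $\f_Q$ and using that the $c_j$ already lie in $\f_q^n$, the $\f_q$-independence of the $c_j$ forces every coefficient of every $\lambda_j$ to vanish. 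Hence $\dim_{\f_Q}\C(C,\ALPHA) = k = \dim_{\f_q}C$, and applying the same argument to $C^\perp$ and $\BETA$ yields $\dim_{\f_Q}\C(C^\perp,\BETA) = \dim_{\f_q}C^\perp = n-k$.

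For the duality statement, the key observation is the cancellation $\alpha_i\cdot\alpha_i^{-1} = 1$. For any $c\in C$ and $c'\in C^\perp$, I would compute the standard $\f_Q$-inner product
$$\langle \psi_{\ALPHA}(c),\psi_{\BETA}(c')\rangle = \sum_{i=1}^n (c_i\alpha_i)(c'_i\alpha_i^{-1}) = \sum_{i=1}^n c_ic'_i = 0,$$
using only that $\langle c,c'\rangle_{\f_q}=0$. Since the inner product is $\f_Q$-bilinear and both $\C(C,\ALPHA)$ and $\C(C^\perp,\BETA)$ are the $\f_Q$-linear spans of these images, this gives the inclusion $\C(C^\perp,\BETA) \subseteq \C(C,\ALPHA)^\perp$.

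To conclude, I would compare dimensions: the inclusion above combined with $\dim_{\f_Q}\C(C^\perp,\BETA) = n-k = \dim_{\f_Q}\C(C,\ALPHA)^\perp$ forces equality. I do not anticipate any serious obstacle; the only subtle point is the dimension-preservation argument, where one has to be careful to exploit that the $c_j$ belong to the small field $\f_q$ so that an $\f_Q$-dependence among their images can be decomposed into $\f_q$-dependencies via a basis of $\f_Q/\f_q$.
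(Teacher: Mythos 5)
Your proof is correct and follows essentially the same route as the paper: establish that $\C(C,\ALPHA)$ and $\C(C^\perp,\BETA)$ are orthogonal via the cancellation $\alpha_i\alpha_i^{-1}=1$ in the $\f_Q$-inner product, show the dimensions are preserved, and conclude by a dimension count. The only cosmetic difference is in the dimension step, where the paper invokes systematic generator matrices (whose identity block makes the $\f_Q$-independence of the images immediate) while you argue directly that an $\f_q$-independent family in $\f_q^n$ stays independent over $\f_Q$; both arguments are valid and interchangeable.
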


\begin{proof}
  It should be clear that $\C(C,\ALPHA)$ and $\C(C^\perp,\BETA)$ are
  orthogonal to each other. Choosing systematic generator matrices for
  $C$ and $C^\perp$ shows that $\dim_{\f_Q}\C(C,\ALPHA) = \dim_{\f_q}
  C$ and $\dim_{\f_Q}\C(C^\perp,\BETA) = \dim_{\f_q}C^\perp$.
\end{proof}

\begin{lemma}\label{lem:support}
  Let $C$ be an $\f_q$-linear code of $\f_q^n$ and let
  $W\subset\{1,2,\ldots , n\}$ be a set of coordinates such that no
  non-zero codeword of $C$ has its Hamming support included in $W$. Then, for
  any $j\in W$ and for any $\ALPHA$, there is a codeword $\x$ of
  $\C(C,\ALPHA)^\perp$ such that $\supp(\x)\cap W=\{j\}$.
\end{lemma}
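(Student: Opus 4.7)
The plan is to reduce everything to a statement about $C^\perp$ using the previous lemma, and to recognize the hypothesis as a standard linear-algebraic duality condition on codes.

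First I would rephrase the hypothesis. Saying that no non-zero codeword of $C$ has its Hamming support included in $W$ is the same as saying $C \cap \f_q^W = \{0\}$, where $\f_q^W$ denotes the coordinate subspace $\{v \in \f_q^n : v_i = 0 \text{ for } i \notin W\}$. Dualising this intersection identity in $\f_q^n$ gives
\[
C^\perp + \f_q^{\{1,\ldots,n\}\setminus W} = \f_q^n,
\]
since $(\f_q^W)^\perp = \f_q^{\{1,\ldots,n\}\setminus W}$. (One can check this directly by comparing dimensions, or by producing a suitable parity-check matrix whose columns indexed by $W$ are linearly independent.)

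Next, for any fixed $j \in W$, apply this decomposition to the standard basis vector $e_j$: there exist $\y \in C^\perp$ and $w \in \f_q^{\{1,\ldots,n\}\setminus W}$ with $e_j = \y + w$. Restricting to coordinates in $W$ gives $y_j = 1$ and $y_i = 0$ for all $i \in W \setminus \{j\}$, so $\supp(\y) \cap W = \{j\}$ (with no constraint on the coordinates of $\y$ outside $W$).

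Finally, set $\x = \psi_{\BETA}(\y) = (y_1\beta_1,\ldots,y_n\beta_n)$. By Lemma~\ref{lem:dual}, we have $\x \in \psi_{\BETA}(C^\perp) \subseteq \C(C^\perp,\BETA) = \C(C,\ALPHA)^\perp$. Since the $\alpha_i$, and hence the $\beta_i = \alpha_i^{-1}$, are all non-zero, $\supp(\x) = \supp(\y)$, and in particular $\supp(\x) \cap W = \{j\}$, as required. There is no real obstacle here; the only subtlety is making sure that one invokes the correct direction of the standard coding-theoretic duality (intersection with a coordinate subspace corresponds to sum with the complementary coordinate subspace in the dual code), and that the entries of $\ALPHA$ are understood to be non-zero so that $\BETA$ is well-defined and preserves supports.
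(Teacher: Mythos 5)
Your proof is correct and follows essentially the same route as the paper: the paper also observes that the hypothesis forces the shortened code on $W$ to be trivial, deduces the existence of $x\in C^\perp$ with $\supp(x)\cap W=\{j\}$ by the standard duality (which you spell out via $C^\perp+\f_q^{\{1,\ldots,n\}\setminus W}=\f_q^n$), and then transfers this to $\C(C,\ALPHA)^\perp$ via Lemma~\ref{lem:dual}. You have merely filled in the details that the paper leaves implicit, including the harmless but worth-noting point that the $\alpha_i$ must be non-zero for $\BETA$ to exist and for supports to be preserved.
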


\begin{proof}
  Since the code $C$, punctured so as to leave only the coordinates in
  $W$, has only the zero codeword, we have that for any $j\in W$ there
  is $x$ in $C^\perp$ such that $\supp(x)\cap W=\{j\}$. The conclusion
  follows by Lemma \ref{lem:dual}.
\end{proof}

\begin{corollary}\label{cor:d}
  Let $C$ be an $\f_q$-linear code of $\f_q^n$ with minimum Hamming
  distance $d$. Then, for any $\ALPHA$, the {\em Hamming} minimum
  distance of the embedded code $\C(C,\ALPHA)$ is equal to $d$.
\end{corollary}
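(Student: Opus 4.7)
The plan is to prove the two inequalities $d_{\rm Hmin}(\C(C,\ALPHA))\le d$ and $d_{\rm Hmin}(\C(C,\ALPHA))\ge d$ separately. Throughout I will use the standing assumption (implicit also in Lemma \ref{lem:dual} through the use of $\BETA=\ALPHA^{-1}$) that every $\alpha_i$ is non-zero.

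For the upper bound, I would simply exhibit a codeword of $\C(C,\ALPHA)$ of Hamming weight $d$. Picking $x\in C$ of Hamming weight $d$, the image $\psi_{\ALPHA}(x)=(x_1\alpha_1,\ldots,x_n\alpha_n)$ lies in $\C(C,\ALPHA)$ by construction, and since $\alpha_i\neq 0$, the Hamming support of $\psi_{\ALPHA}(x)$ coincides with that of $x$. This gives the upper bound immediately.

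The lower bound is the content of the statement and is where Lemma \ref{lem:support} does the real work. I would argue by contradiction: suppose there exists a non-zero $\y\in\C(C,\ALPHA)$ with Hamming support $W:=\supp(\y)$ satisfying $|W|<d$. Since $C$ has minimum Hamming distance $d$, no non-zero codeword of $C$ can have its Hamming support contained in such a small set $W$. The hypothesis of Lemma \ref{lem:support} is therefore satisfied, and for every $j\in W$ we obtain a dual codeword $\x_j\in\C(C,\ALPHA)^\perp$ with $\supp(\x_j)\cap W=\{j\}$.

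The contradiction is then extracted by pairing $\y$ against each such $\x_j$. Because $\y$ vanishes outside $W$, the inner product $\langle\y,\x_j\rangle$ reduces to a sum over $W\cap\supp(\x_j)=\{j\}$, collapsing to the single term $\y_j(\x_j)_j$. Orthogonality forces this to be zero, and $(\x_j)_j\neq 0$ then gives $\y_j=0$ for every $j\in W$, contradicting $W=\supp(\y)\neq\emptyset$. Combining the two bounds yields equality. I do not expect a real obstacle here: the only subtlety is verifying that the inner product collapses to a single term, which is automatic from the precise support condition delivered by Lemma \ref{lem:support}.
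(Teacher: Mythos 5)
Your proof is correct and follows essentially the same route as the paper: the upper bound is immediate from the definition of $\C(C,\ALPHA)$ (with the harmless standing assumption that the $\alpha_i$ are non-zero), and the lower bound is exactly the intended application of Lemma~\ref{lem:support}, which the paper invokes without spelling out the pairing argument you make explicit.
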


\begin{proof}
  That it is at most $d$ is clear by the definition of
  $\C(C,\ALPHA)$. To see that it is at least $d$ follows from
  Lemma~\ref{lem:support}. 
\end{proof}

\begin{lemma}\label{lem:(q+1)/q}
  Let $C$ be an $\f_q$-linear code of $\f_q^n$ with minimum Hamming
  distance $d$. Let $w<\frac{q+1}{q}d$. Then, for any
  $\ALPHA=(\alpha_1,\ldots ,\alpha_n)$, the only codewords of
  $\C(C,\ALPHA)$ of Hamming weight $w$ are of the form
  $\lambda\psi_{\ALPHA}(x)$, $\lambda\in\f_Q$, for some codeword of $C$. In particular, if
  $\alpha_1,\ldots ,\alpha_n$ are linearly independent, then any
  codeword of $\C(C,\ALPHA)$ of Hamming weight $w$ is also of rank weight $w$.
\end{lemma}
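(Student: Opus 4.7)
The plan is to use the Griesmer bound to argue that a codeword of low Hamming weight in $\C(C,\ALPHA)$ must originate from a one-dimensional $\f_q$-subspace of $C$, and therefore be a scalar multiple of some $\psi_{\ALPHA}(x)$.

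I would first fix an $\f_q$-basis $\beta_1,\dots,\beta_m$ of $\f_Q$ and, by expanding in this basis the $\f_Q$-coefficients that realize any $\x\in\C(C,\ALPHA)$ as an $\f_Q$-linear combination of elements of $\psi_{\ALPHA}(C)$, write
\[
\x=\sum_{\ell=1}^{m}\beta_\ell\,\psi_{\ALPHA}(c^{(\ell)}),\qquad c^{(\ell)}\in C.
\]
Coordinate $j$ then reads $\x_j=\alpha_j\sum_\ell \beta_\ell c^{(\ell)}_j$. Since each $c^{(\ell)}_j\in\f_q$ and the $\beta_\ell$ are $\f_q$-linearly independent, the inner sum vanishes precisely when every $c^{(\ell)}_j$ is zero; provided the $\alpha_j$ are nonzero (which is automatic in the linearly-independent case targeted by the ``in particular'' clause and the later applications), this gives $\supp(\x)=\bigcup_\ell\supp(c^{(\ell)})$, so every $c^{(\ell)}$ is supported in the set $S=\supp(\x)$ of size~$w$.

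The heart of the proof is then to consider the $\f_q$-linear subspace $V=\mathrm{span}_{\f_q}\{c^{(1)},\dots,c^{(m)}\}\subseteq C$: all its nonzero codewords lie in $C$ (so have Hamming weight $\geq d$) and are supported in $S$. Puncturing $V$ outside $S$ is therefore injective and yields a $[w,\dim V,\geq d]_q$-code. If $\dim V\geq 2$, the Griesmer bound (Proposition~2) forces
\[
w\;\geq\;d+\left\lceil\tfrac{d}{q}\right\rceil\;\geq\;\tfrac{q+1}{q}d,
\]
contradicting the hypothesis $w<(q+1)d/q$. Hence $\dim V\leq 1$; assuming $\x\neq 0$, we have $V=\f_q c$ for a single $c\in C$, with $c^{(\ell)}=a^{(\ell)}c$ for scalars $a^{(\ell)}\in\f_q$. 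Setting $\lambda=\sum_\ell\beta_\ell a^{(\ell)}\in\f_Q$ gives $\x=\lambda\,\psi_{\ALPHA}(c)$, as required.

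For the ``in particular'' clause, once $\alpha_1,\dots,\alpha_n$ are $\f_q$-linearly independent the rank weight of $\x=\lambda\,\psi_{\ALPHA}(c)$ equals the $\f_q$-dimension of $\{\lambda\alpha_j c_j : j\in\supp(c)\}$. Multiplication by $\lambda\neq 0$ is an $\f_q$-linear bijection of $\f_Q$, and each $c_j$ with $j\in\supp(c)$ is a nonzero element of $\f_q$, so this dimension equals $\dim_{\f_q}\{\alpha_j:j\in\supp(c)\}=|\supp(c)|$ by linear independence; since the $\alpha_j$ are also all nonzero, this is also $w_H(\x)=w$. The key delicate point is matching the threshold $(q+1)d/q$ to the first two terms of the Griesmer sum for a two-dimensional code; the rest is bookkeeping around the $\beta_\ell$-expansion.
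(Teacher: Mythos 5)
Your proof is correct, and while it rests on the same pivotal application of the Griesmer bound --- a subcode of $C$ of dimension $\geq 2$ supported on $w$ coordinates would force $w\geq d+\lceil d/q\rceil\geq \frac{q+1}{q}d$ --- you reach that subcode by a genuinely different route than the paper. The paper fixes a $w$-set $W$, passes to shortened codes, and invokes the duality statement of Lemma~\ref{lem:dual} to get the dimension count $\dim_{\f_Q}\C(C,\ALPHA)_{|W}=\dim_{\f_q}C_{|W}$, whence $\C(C,\ALPHA)_{|W}=\C(C_{|W},\ALPHA)$ and the conclusion. You instead expand the $\f_Q$-coefficients in an $\f_q$-basis $(\beta_\ell)$ to write $\x=\sum_\ell\beta_\ell\psi_{\ALPHA}(c^{(\ell)})$ with $c^{(\ell)}\in C$, observe that $\supp(\x)=\bigcup_\ell\supp(c^{(\ell)})$, and apply Griesmer to $\mathrm{span}_{\f_q}\{c^{(\ell)}\}$. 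Your version is more self-contained (it does not need Lemma~\ref{lem:dual}) and makes explicit where the codeword $c$ and the scalar $\lambda$ come from; the paper's version is shorter because the duality machinery is already in place and is reused elsewhere.

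One point you raise deserves emphasis: your identification of supports needs every $\alpha_j\neq 0$, and this is not a defect of your argument alone --- the lemma as literally stated (``for any $\ALPHA$'') fails when some $\alpha_j=0$. For instance, take $C$ the even-weight code of length $3$ over $\f_2$ (so $d=2$) and $\ALPHA=(0,1,\beta)$ with $\beta\in\f_Q\setminus\f_2$: then $\C(C,\ALPHA)=\{(0,a,b):a,b\in\f_Q\}$, and $(0,1,1)$ has Hamming weight $2<\frac{3}{2}d$ but is not of the form $\lambda\psi_{\ALPHA}(x)$ for any $x\in C$. The paper's own proof carries the same implicit restriction, since Lemma~\ref{lem:dual} presupposes that $\ALPHA$ is coordinatewise invertible; and in the one place where Lemma~\ref{lem:(q+1)/q} is applied (the first case of Theorem~\ref{thm:distance}), the $\alpha_i$ are taken $\f_q$-linearly independent, hence nonzero. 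You were right to flag the hypothesis explicitly rather than sweep it under the rug.
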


\begin{proof}
  Let $W\subset\{1,2,\ldots , n\}$ be a set of $w$ coordinates. Let
  $C_{|S}$ be the corresponding shortened code of $C$, i.e. the set of
  codewords of $C$ of support included in $W$. By
  Lemma~\ref{lem:dual}, we have that the dual code of
  $\C(C,\ALPHA)_{|S}$ has dimension $w-\dim C_{|S}$ and therefore 
  $\dim\C(C,\ALPHA)_{|S} = \dim C_{|S} = \dim\C(C_{|S},\ALPHA)$. By the Griesmer
  bound, the dimension of $C_{|S}$ is at most $1$. Therefore the only
  codewords of $\C(C_{|S},\ALPHA)$ are of the form $\lambda\psi_{\ALPHA}(x)$.
\end{proof}

\begin{theorem}\label{thm:distance}
  Subject to the condition $m>2qn$, when $\ALPHA$ is chosen randomly
  and uniformly in $\f_Q^n$, then for any linear code $C\in\f_q^n$, 
  the probability that the rank
  minimum distance of $\C(C,\ALPHA)$ differs from the Hamming minimum
  distance of $C$ is bounded from above by a quantity that vanishes
  exponentially fast in $n$. 
\end{theorem}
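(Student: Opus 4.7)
My plan is a union bound on the "bad event" of a rank-drop, controlled via Lemma~\ref{lem:(q+1)/q} in the low-Hamming-weight regime and via the Griesmer bound on shortenings in the high-Hamming-weight regime. First I would observe that $\alpha_1,\ldots,\alpha_n$ are $\f_q$-linearly independent except with probability at most $2q^{n-m}$, which is $q^{-\Omega(n)}$ under the hypothesis $m>2qn$; conditioning on that event costs only an exponentially small additive term. On it, Corollary~\ref{cor:d} gives that the Hamming minimum distance of $\C(C,\ALPHA)$ is exactly $d$ (so the rank minimum distance is automatically at most $d$), and Lemma~\ref{lem:(q+1)/q} shows that every codeword of Hamming weight $w<(q+1)d/q$ has rank weight equal to its Hamming weight, hence at least $d$. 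So only codewords of Hamming weight $w\ge w_0:=\lceil(q+1)d/q\rceil$ can drop the rank below $d$.

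Writing a codeword of $\C(C,\ALPHA)$ as $(f_i\alpha_i)_i$ where $\mathbf{f}$ ranges over the $\f_Q$-linear extension $\overline C$ of $C$, rank weight strictly less than $d$ means that all coordinates lie in some $\f_q$-subspace $V\subset\f_Q$ of dimension $d-1$. Because the products $f_i\alpha_i$ on the support of $\mathbf{f}$ are independent uniform elements of $\f_Q$ as $\ALPHA$ varies, we have $\Pr((f_i\alpha_i)_i\in V^n)=q^{-(m-d+1)w(\mathbf{f})}$. Union-bounding over the $\binom{m}{d-1}_q$ such subspaces $V$ and over codewords $\mathbf{f}\in\overline C$ with $w(\mathbf{f})\ge w_0$, the bad event has probability at most a constant times $q^{(d-1)(m-d+1)}\sum_{w\ge w_0}N_w\,q^{-(m-d+1)w}$, where $N_w$ counts the weight-$w$ codewords of $\overline C$. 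To control $N_w$ uniformly in $C$ I invoke the Griesmer bound on shortenings: for any support $S$ of size $w$, $\dim C_{|S}\le s$ whenever $w<W_{s+1}$, where $W_s=\sum_{i=0}^{s-1}\lceil d/q^i\rceil$, hence $N_w\le\binom{n}{w}q^{ms}$. Splitting the sum over $w$ into intervals $[W_s,W_{s+1})$ for $s\ge 2$ and using that $m>2qn$ forces the decay factor $q^{-(m-d+1)}$ to dominate the binomial growth in $w$, each interval's contribution is essentially $\binom{n}{W_s}\,q^{(d-1)(m-d+1)+ms-(m-d+1)W_s}$.

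The main obstacle is then verifying that this combined exponent is negative of linear-in-$n$ order for every relevant $s$ under the hypothesis $m>2qn$. The tightest case is $s=2$: Griesmer gives $W_2\ge(q+1)d/q$, so the exponent is at most $(d-1)(m-d+1)+2m-(m-d+1)(q+1)d/q$, which simplifies to roughly $-m/q\le-2n$ and dominates the $\binom{n}{W_2}\le 2^n$ prefactor. For larger $s$ the Griesmer margin $W_s\ge d(1-q^{-s})/(1-1/q)$ combined with the slack factor $2q$ in the hypothesis keeps the exponent negative with a linear-in-$n$ margin, and since $s$ ranges over at most $k+1\le n+1$ values, summing over $s$ multiplies by only a polynomial factor. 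Combining with the linear-independence event from the first paragraph yields the desired exponentially small upper bound on the probability that the rank minimum distance differs from the Hamming minimum distance. The delicate point throughout is that the Griesmer bound must be exploited in every dyadic-style slice of $w$; using only Singleton on shortenings is not enough to beat the $\binom{m}{d-1}_q$ prefactor coming from the union bound over $V$.
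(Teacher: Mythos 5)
Your overall architecture matches the paper's: dispose of Hamming weights $w<(q+1)d/q$ via Corollary~\ref{cor:d}, Lemma~\ref{lem:(q+1)/q} and the $\f_q$-linear independence of the $\alpha_i$, then run a first-moment bound for larger $w$ with the Griesmer bound controlling the dimension of shortened codes. Your reorganization of the first moment --- a union over pairs $(\mathbf f,V)$ with $\mathbf f$ ranging over the \emph{fixed} code $\overline C$, rather than over low-rank vectors of $\f_Q^n$ as the paper does --- is legitimate and, by Fubini, produces essentially the same quantity.

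The gap is in the arithmetic of the $s=2$ slice, which you correctly identify as the tightest case but then miscompute. Writing $M=m-d+1$, your exponent $(d-1)M+2m-MW_2=M(d-1-W_2)+2m$ equals, with your own estimate $W_2\ge (q+1)d/q$, the quantity $m+d-1-Md/q\approx m(1-d/q)$: this is \emph{positive} whenever $d<q$, and the simplification to ``roughly $-m/q$'' is only valid for $d\ge q+1$. Using the exact $W_2=d+\lceil d/q\rceil=d+1$ for $d\le q$, the exponent is $2(d-1)>0$; concretely, for $C$ a $[d+1,2,d]$ MDS code with $2\le d\le q$, your bound for $w=d+1$, $s=2$ evaluates to about $q^{2(d-1)}>1$ and proves nothing. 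The missing idea is to quotient by scalars: rank weight is invariant under $\mathbf x\mapsto\lambda\mathbf x$, $\lambda\in\f_Q^*$, so it suffices to union over the at most $2Q^{s-1}$ projective classes of codewords of $\overline C$ supported on a given $w$-set (equivalently, normalize the first nonzero coordinate of $\mathbf f$ to $1$). This recovers a factor of $Q=q^m$ and turns the slice-$s$ exponent into $(s-1)(d-1)-(m-d+1)\lceil d/q\rceil\le (n-d)(d-1)-nd<-n$ under $m>2qn$, after which your summation over slices and the remaining prefactors do go through. (Note that the paper's own write-up leans at the corresponding step on the assertion that there are at least $d+d/q-1$ independent parity checks; the maximal such number is $w-\dim C_{|W}$, which at $w=W_2$ with $\dim C_{|W}=2$ is only $d+\lceil d/q\rceil-2$, so the same scalar-quotient repair is needed there as well.)
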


\begin{proof}
  Let $C$ be fixed and let $d$ be its Hamming minimum distance. 
  It suffices to prove that for any Hamming weight
  $w\leq n$, the probability $P_w$ that there exists a codeword of
  $\C(C,\ALPHA)$ of Hamming weight $w$ and of rank weight $<d$
  vanishes exponentially fast.
  \begin{itemize}
  \item $w< d+d/q$. If $w<d$, Then by Corollary~\ref{cor:d},
    $P_w=0$. Otherwise, by Lemma~\ref{lem:(q+1)/q}, $P_w$ is bounded
    from above by the probability that $\alpha_1\ldots \alpha_n$ are
    linearly dependent, which is exponentially small in $n$.
  \item $w\geq d+d/q$. We bound from above $P_w$ by the expected
    number of codewords of $\C(C,\ALPHA)$ of rank weight $<d$ and
    Hamming weight $w$. Let $\x$ be a vector of $\f_Q^n$ of Hamming
    weight $w$ and let $W$ be the Hamming support of $\x$, so that
    $w=|W|$. Let $J$ be a maximal subset of coordinates of $W$ such
    that no nonzero codeword of $C$ has its support included in
    $I$. By Lemma~\ref{lem:support}, we have that there are $|J|$
    parity-checks for the event $\x\in\C(C,\ALPHA)$ that are satisfied
    each with probability $1/Q$ and, truncated to $W$, are independent over $\f_Q$
    and therefore are satisfied independently in the sense of probability.
    Hence, the probability that $\x$ is a codeword of $\C(C,\ALPHA)$
    is at most $1/Q^{|J|}$. By the Griesmer bound, we have $|J|\geq d+d/q-1$.
    Bounding from above the number $N_w$ of vectors of $\f_Q^n$ of
    Hamming weight $w$ and rank weight $d-1$ by:
    $$N_w\leq \binom{n}{w}Q^{d-1}(q^{d-1})^w\leq
    2^nq^{m(d-1)+w(d-1)}$$
we obtain
\begin{align*}
  P_w&\leq N_w\frac 1{q^{md+md/q-m}}\\
     &\leq 2^nq^{w(d-1)-md/q}
\end{align*}
and the result follows from the hypothesis $m/q>2n$.
  \end{itemize}
\end{proof}

\section{The syndrome decoding problem}

Let us recall the syndrome decoding problem:

\begin{tabular}{ll}
  {\em Instance:} &
  \begin{minipage}[t]{0.8\linewidth}
    an $r\times n$ matrix $\HH=[\h_1,\h_2,\ldots
,\h_n]$ over a field $\f$, a column vector $\s\in\f^r$, an integer $w$
  \end{minipage}\\
  {\em Question:} & \begin{minipage}[t]{0.8\linewidth}
  does there exist $\x=(x_1,\ldots ,x_n)\in \f^n$ of
  weight at most $w$ such that
  $\sigma(\x)=\HH\,^t\!\x =\sum_{i=1}^nx_i\h_i=\s$~?
  \end{minipage}
\end{tabular}

When $\f=\f_q$ and the weight refers to the Hamming weight,
we have the classical or {\em Hamming} syndrome
decoding problem: when $\f=\f_Q$ and the weight refers to the rank (or
rank weight) we have the {\em rank} syndrome
decoding problem. It is classical that the syndrome decoding problem
is equivalent to the decoding (or closest vector) problem, because
looking for the closest codeword to a given vector $\y$ amounts to
computing the syndrome $\s=\sigma(\y)$ of $\y$ and solving the syndrome decoding
problem for $\s$ (subtracting the solution to $\y$ gives the closest
codeword).

Since the Hamming syndrome decoding problem is known to be
NP-complete, it is natural to try and devise a transformation from it
to the rank syndrome decoding problem. For this purpose, let us
introduce the following notation:
for any $r\times n$ matrix $\HH=[\h_1,\h_2,\ldots
,\h_n]$ of elements of $\f_q$, and for any $\BETA=(\beta_1,\ldots
,\beta_n)$, $\beta_i\in\f_Q$, denote by $\HH(\BETA)$ the matrix 
$$\HH(\BETA) = [\beta_1\h_1,\beta_2,\h_2,\ldots ,\beta_n\h_n].$$

Our strategy is, given an instance $(\HH,\s,w)$ of the Hamming
syndrome decoding problem, to associate to it the transformed instance
$(\HH(\BETA),\s,w)$ of the rank decoding problem. It is clear that if
$x$ is a solution to the Hamming syndrome decoding problem, then 
$\x=\psi_{\ALPHA}(x)$ is a solution to the associated rank syndrome
decoding problem with
$\ALPHA=\BETA^{-1}=(\beta_1^{-1},\ldots,\beta_n^{-1})$, the rank
weight of $\x$ being not more than the Hamming weight of $x$. Again,
we strive to show that when choosing $\BETA$ randomly and uniformly,
the smallest rank weight of a solution to $(\HH(\BETA),\s,w)$ is very
probably equal to the smallest Hamming weight of a solution to
$(\HH,\s,w)$.

\begin{lemma}\label{lem:w_H}
  Let $\HH$ be an $r\times n$ matrix and let $\s$ be a column vector
  of $\f_q^r$. Let $w_H$ be the minimum Hamming weight of a vector $x$
  of $\f_q^n$ of syndrome $\sigma(x)=\HH\,^t\!x = \s$. Let
  $\x\in\f_Q^n$ be such that $\sigma_{\BETA}(\x) = \HH(\BETA)\,^t\!\x
  = \s$. Then, if $J\subset\{1,\ldots ,n\}$ is the Hamming support of
  $\x$, there exists a subset $W\subset J$ such that $|W|=w_H$ and the
  columns 
  $(\h_j)_{j\in W}$ of $\HH$ indexed by $W$ are $\f_Q$-linearly independent.
\end{lemma}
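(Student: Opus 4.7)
The plan is to establish that the $\f_q$-rank of the columns $(\h_j)_{j\in J}$ is at least $w_H$, and then take $W$ to be any $w_H$-subset of a maximal $\f_q$-linearly independent subfamily of $(\h_j)_{j\in J}$. Since every $\h_j$ lies in $\f_q^r$, $\f_q$-linear independence of such a family is equivalent to $\f_Q$-linear independence, so the resulting $W$ will meet the requirements of the lemma.

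The first step is to show that $\s$ lies in the $\f_q$-span of $(\h_j)_{j\in J}$. From $\HH(\BETA)\,^t\!\x = \s$ we get
$$\s \;=\; \sum_{j\in J}(\beta_j x_j)\,\h_j,$$
which is a priori only an $\f_Q$-linear combination. Fix a basis $\mathcal{B}$ of $\f_Q$ over $\f_q$ containing $1$ and expand each scalar $\beta_j x_j$ in $\mathcal{B}$. Since $\s$ and the $\h_j$ all lie in $\f_q^r$, reading off the coefficient of the basis element $1$ on both sides expresses $\s$ as an $\f_q$-linear combination of the $\h_j$ with $j\in J$.

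Next I would exploit this to bound $w_H$. Let $r_J$ be the $\f_q$-rank of $(\h_j)_{j\in J}$ and let $L\subseteq J$ be a maximal $\f_q$-linearly independent subfamily, so $|L|=r_J$. Since $\s$ lies in the $\f_q$-span of $(\h_l)_{l\in L}$, there exists $y\in\f_q^n$ with $\supp(y)\subseteq L$ and $\sigma(y)=\s$. Such a $y$ has Hamming weight at most $r_J$, and at least $w_H$ by definition, so $w_H\le r_J$. Any subset $W\subseteq L$ of size $w_H$ is then $\f_q$-linearly independent as a subfamily of $L$; since an $\f_Q$-linear dependence among vectors of $\f_q^r$ would contract to an $\f_q$-dependence by the same basis-expansion trick, $(\h_j)_{j\in W}$ is $\f_Q$-linearly independent as well, and $W\subseteq J$ by construction.

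I do not expect any real obstacle: the only substantive ingredient is the equivalence between $\f_q$- and $\f_Q$-linear combinations for families of vectors living in $\f_q^r$ (both for spans and for independence), which follows from expanding scalars in a basis of $\f_Q/\f_q$. Everything else is matroid-theoretic bookkeeping on the columns of $\HH$.
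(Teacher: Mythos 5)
Your proof is correct and follows essentially the same route as the paper's: both extract a maximal linearly independent subfamily of the columns indexed by the support of $\x$, observe that $\s$ lies in its span, and invoke the minimality of $w_H$ to bound that subfamily's size from below. The only difference is presentational: you establish the equivalence of $\f_q$- and $\f_Q$-linear independence (and the descent of the span from $\f_Q$ to $\f_q$) directly by expanding scalars in a basis of $\f_Q$ over $\f_q$, where the paper appeals to Lemma~\ref{lem:dual}; your version is self-contained and makes explicit a span-descent step that the paper leaves implicit.
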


\begin{proof}
  Let $W$ be a maximal subset of the support of $\x$ such
  $(\beta_j\h_j)_{j\in W}$ is $\f_Q$-linearly independent. Since
  $\sigma_{\BETA}(\x) =\s$, we must also have that $\s$ belongs to the
  $\f_Q$-linear span of $(\beta_j\h_j)_{j\in W}$. Now by
  Lemma~\ref{lem:dual} we have that $\f_Q$-linear independence of
  $(\beta_j\h_j)_{j\in I}$ (and therefore also simply of $(\h_j)_{j\in
    I}$)
  is equivalent to $\f_q$-linear independence of $(\h_j)_{j\in I}$ for
  any set $I$ of coordinates. Since any set of columns of $\HH$
  that generate $\f_q$-linearly $\s$ must be of size at least $w_H$ by
  definition of $w_H$ we have $|W|\geq w_H$.
\end{proof}

\begin{theorem}\label{thm:decoding}
  Subject to the condition $m>n^2$, when $\BETA$ is chosen randomly and
  uniformly in $\f_Q$, then for any $r\times n$ matrix $\HH$ over
  $\f_q$ and any column vector $\s\in\f_q^r$, denoting by $w_H$ the
  minimum Hamming weight of a vector of $\f_q^n$ of syndrome $\s$ by
  $\HH$ and by $w_R$ the minimum rank weight of a vector of $\f_Q^n$
  of syndrome $\s$ by $\HH(\BETA)$, we have that the probability that
  $w_H\neq w_R$ is bounded from above by a quantity that vanishes
  exponentially fast in $n$.
\end{theorem}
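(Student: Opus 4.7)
The plan is to mirror the expected-count argument of Theorem~\ref{thm:distance}, replacing codewords by solutions of the syndrome system. First I would observe that $w_R\le w_H$ is automatic: given $x\in\f_q^n$ of Hamming weight $w_H$ satisfying $\HH\,^t\!x=\s$, the vector $\x=\psi_\ALPHA(x)$ with $\ALPHA=\BETA^{-1}$ satisfies $\HH(\BETA)\,^t\!\x=\s$ and has rank at most $w_H$. All the substance lies in the matching lower bound holding with high probability, which I would prove by bounding from above the expected number of ``bad'' $\x\in\f_Q^n$, namely those that satisfy $\HH(\BETA)\,^t\!\x=\s$ and have rank at most $w_H-1$.

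The key step is computing, for a fixed $\x$ with Hamming support $J$, the probability $\Pr_\BETA[\HH(\BETA)\,^t\!\x=\s]$. Making the change of variable $y_i=\beta_i x_i$ for $i\in J$---uniform and independent on $\f_Q$ because $x_i\neq 0$---turns the constraint into $\sum_{i\in J}y_i\h_i=\s$, a linear system over $\f_Q$ in the $y_i$'s. Its solution set is either empty or of size $Q^{|J|-\rho_J}$, where $\rho_J$ denotes the $\f_q$-rank of the columns $(\h_i)_{i\in J}$ (equal to their $\f_Q$-rank, by the same observation that underlies Lemma~\ref{lem:dual}). Consequently the probability equals $Q^{-\rho_J}$ or vanishes, and in the former case Lemma~\ref{lem:w_H} forces $\rho_J\ge w_H$, so in all cases one gets the bound $Q^{-w_H}$.

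To conclude I would count vectors $\x\in\f_Q^n$ of rank at most $w_H-1$ by selecting a $(w_H-1)$-dimensional $\f_q$-subspace of $\f_Q$ (at most $q^{m(w_H-1)}$ choices, via the standard Gaussian binomial estimate) and then an $n$-tuple of its elements ($q^{n(w_H-1)}$ choices), giving a total at most $q^{(m+n)(w_H-1)}$. Multiplying by the per-vector bound $Q^{-w_H}=q^{-mw_H}$ and using $w_H\le n$ yields
$$\Pr[w_R<w_H] \;\le\; q^{(m+n)(w_H-1)-mw_H} \;=\; q^{n(w_H-1)-m} \;\le\; q^{n(n-1)-m},$$
which decays exponentially in $n$ as soon as $m>n^2$. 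The main obstacle is the probability computation (justifying $Q^{-\rho_J}$ and applying Lemma~\ref{lem:w_H} to see $\rho_J\ge w_H$); the rest is routine counting in parallel to Theorem~\ref{thm:distance}.
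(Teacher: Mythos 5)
Your proposal is correct and follows essentially the same route as the paper: a first-moment bound on the number of bad vectors, with the per-vector probability $Q^{-w_H}$ derived from Lemma~\ref{lem:w_H} exactly as in the paper's proof (your change-of-variables justification of that bound is a slightly more explicit version of the same argument). The only divergence is the count of low-rank vectors --- you enumerate a $(w_H-1)$-dimensional subspace and then an $n$-tuple of its elements instead of stratifying by Hamming weight, which yields the marginally sharper bound $q^{n(w_H-1)-m}$ and incidentally removes the need for the paper's separate treatment of the case $w_H=n$.
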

\begin{proof}
  Let $\HH$, $\s$ and $w_H$ be fixed. Since we have remarked that
  $w_R\leq w_H$, It suffices to show for every integer $w\leq n$ that
  the probability $P_w$ that there exists a codeword of $\f_Q^n$ of
  syndrome $\s$ by $\HH(\BETA)$ and of Hamming weight $w$ and rank
  weight $<w_H$, is a quantity that vanishes exponentially fast with
  $n$.

  By Lemma~\ref{lem:w_H}, if $w<w_H$ we have $P_w=0$. Suppose
  therefore $w\geq w_H$. Let $\x$ be a vector of $\f_Q^n$ of Hamming
  weight $w$. 
  Lemma~\ref{lem:w_H} implies that there are at
  least $w_H$ columns of $\HH$ indexed by nonzero coordinates
  of $\x$ that are $\f_Q$-linearly independent. This implies that the
  span of $\f_Q$-linear combinations of these $w_H$ columns has size
  $Q^{w_H}$, and therefore the probability that the syndrome by
  $\HH(\BETA)$ of $\x$ equals $\s$ is at most $1/Q^{w_H}$.

  Bounding from above $P_w$ by the expectation of the number of
  codewords of Hamming weight $w$ and rank weight $\leq w_H-1$, we
  have:
  \begin{align*}
    P_w &\leq \binom{n}{w}Q^{w_H-1}(q^{w_H-1})^w\frac{1}{Q^{w_H}}\\
        &\leq 2^nq^{n(w_H-1)}\frac 1Q\\
        &\leq q^{nw_H-m}
  \end{align*}
which proves the result since the case $w_H=n$ is easily dealt with
separately. 
\end{proof}

Theorems \ref{thm:distance} and \ref{thm:decoding} yield Theorem 1, our main result
stated in the introduction:

{\bf Proof of Theorem 1}

\begin{proof}
  That $\NP\subset \coRP$ follows directly from the NP-completeness of
  the Hamming Minimum Distance Problem and Theorem~\ref{thm:distance}
  in the first case and from the NP-completeness of the Hamming
  Syndrome Decoding Problem and Theorem~\ref{thm:decoding} in the
  second case: the original Hamming problem is simply transformed
  by a probabilistic embedding into the corresponding Rank metric
  problem. To be precise, the hypothesis that the Rank Minimum
  Distance Problem is in $\coRP$ means that there is probabilistic
  polynomial time algorithm that always outputs ``yes'' on ``yes''
  instances and often outputs ``no'' on ``no'' instances. Applied to
  a code $\C(C,\ALPHA)$ for a random $\ALPHA$, we obtain an algorithm
  that, for the original Hamming Minimum Distance Problem
  always outputs ``yes'' (the minimum distance is not more than a
  given value) on ``yes'' instances and often ``no'' otherwise.

  Next we deduce from the hypothesis that the Rank
  Minimum Distance Problem for rank codes is in RP that
  $\NP\subset \RP$.
  We need to construct a probabilistic
  algorithm that given an integer $w$ and a Hamming code with minimum
  distance $d_W>w$ always decides ``no'', and often decides
  ``yes'' when the minimum distance $d_H$ is not more than $w$. To achieve
  this we find a witness for $d_H\leq w$. The hypothesis that the Rank
  Minimum Distance Problem is in $\RP$ means that there is a
  probabilistic polynomial time algorithm that always decides ``no''
  when the rank minimum distance $d_R$ is above $w$ and very often decides
  ``yes'' when it $d_R\leq w$. Suppose that the Hamming code $C$ is
  such that $d_H\leq w$. We transform $C$ into a random $\C(C,\ALPHA)$
  and ask the probabilistic machine for the Minimum Rank Distance
  whether $d_R\leq w$. If the answer is ``no'' we output a ``no''. If
  it is ``yes'', we remove the first column from a fixed parity-check matrix
  of $C$ and start the procedure (create another random rank-metric
  code from the shortened version of $C$) again. If the answer is
  ``no'', we put back the removed column and start again by removing
  the second column, until we either run out of columns to remove in
  which case we output a final ``no'', or we obtain a ``yes'', in
  which case we continue removing columns, always of a larger index
  than the columns we have previously tried to remove. We stop
  removing columns if we reach a point when only $w$ columns
  remain. At this point we check that the thus shortened Hamming code
  has dimension at least $1$, in which case we ``output'' a
  ``yes''. In all other cases we output a ``no''.

  We see that the number of times we use randomness and access the
  rank minimum distance oracle is at most $n$. Furthermore, if it is
  true that $d_H\leq w$ for the original code, then with probability
  exponentially close to $1$ for large $n$ and fixed $q$ we will
  obtain a ``yes'', and if it is not true that $d_H\leq w$ we will
  always obtain a ``no''. This is a random polynomial time algorithm
  that puts an NP-complete problem (Minimum Distance for Hamming
  linear codes) in RP, hence the result.

  To reach the same conclusion from the hypothesis that the Rank
  Syndrome Decoding Problem for rank codes is in $\RP$ we
  use a very similar witness constructing technique for the Hamming
  syndrome decoding problem.
\end{proof}

{\bf Remark.} The reduction is somewhat looser in the Decoding case
where an extension field of quadratic degree in $n$ is needed, 
than in the Minimum distance case where a degree linear in $n$ was
sufficient. This is somewhat surprising, since in the more well-known
Hamming distance and Lattice cases, the Minimum Distance problem has
been more difficult to reduce than the Decoding problem.

\section{Further results on approximation problems for rank metric}

The syndrome decoding problem and the minimum distance problem for Hamming distance
are connected to other interesting problems. It is natural to consider generalizations
of these problems from the Hamming distance to rank metric, especially with the
use of our very versatile embedding. In the following as an example of application 
of our embedding we consider the case
of two particular well known approximation problems in Hamming distance: 
the Gap Minimum Distance Problem (GapMDP), for which we want
to approximate the minimum distance of a code up to a constant
and Gap Nearest Codeword (GapNCP) in which we want to approximate the decoding 
distance.  Notice that equivalently the previous (GapNCP) problem can be stated in terms 
of Syndrome Decoding with a parity check matrix.  

These approximation problems are stated in the following way:
 
\begin{definition}[$GapMDP_{q,\gamma}$]
For a prime power $q$ and $\gamma \ge 1$, an instance
of the Gap Minimum Distance problem $GapMDP_{q,\gamma}$ is a linear code $C$ over
$\f_q$, given by its generator matrix, and an integer $t$ such that:

$\bullet$ it is a YES instance if $d_H(C) \le t$;

$\bullet$ it is a NO instance if $d_H(C) > \gamma t$
\end{definition}

\begin{definition}[$GapNCP_{q,\gamma}$]
For a prime power $q$ and $\gamma \ge 1$, an instance $(C,v,t)$
of the Gap Minimum Distance problem $GapNCP_{q,\gamma}$ is a linear code $C$ over
$\f_q$, given by its generator matrix, $v \in \f_q^n$ and a positive integer $t$.

$\bullet$ it is a YES instance if $d_H(v,C) \le t$;

$\bullet$ it is a NO instance if $d_H(v,C) > \gamma t$
\end{definition}

 Both these promise problems
have been proven NP-complete for Hamming distance for $\gamma > 1$
respectively in \cite{CW09} (see also \cite{DMM99}) and \cite{ABS97}.

The generalization of these problems to the rank metric is straightforward:
we may define Gap Rank Minimum Distance (GapRMPD) and Gap Rank Nearest Codeword Problem (GapRNCP):

\begin{definition}[$GapRMDP_{q,\gamma}$]
For a prime power $q$, an integer $m$ and $\gamma \ge 1$, an instance
of the Gap Rank Minimum Distance problem $GapMDP_{q,\gamma}$ is a linear rank code $C$ over
$F_{q^m}$, given by its generator matrix, and an integer $t$ such that:

$\bullet$ it is a YES instance if $d_R(C) \le t$;

$\bullet$ it is a NO instance if $d_R(C) > \gamma t$
\end{definition}

\begin{definition}[$GapRNCP_{q,\gamma}$]
For a prime power $q$, an integer $m$ and $\gamma \ge 1$, an instance $(C,v,t)$
of the Gap Rank Minimum Distance problem $GapRNCP_{q,\gamma}$ is a linear rank code $C$ over
$F_{q^m}$, given by its generator matrix, $v \in F_{q^m}^n$ and a positive integer $t$.

$\bullet$ it is a YES instance if $d_R(v,C) \le t$;

$\bullet$ it is a NO instance if $d_R(v,C) > \gamma t$
\end{definition}

We then deduce the following corollary:

\begin{corollary}
If the problems $GapRMDP_{q,\gamma}$ and $GapRNCP_{q,\gamma}$ are in $coRP$
then NP=ZPP.
\end{corollary}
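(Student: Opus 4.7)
The plan is to lift the probabilistic embeddings of Sections 3 and 4 from the exact minimum distance and decoding problems to their gap approximation versions. The key observation is that Theorems \ref{thm:distance} and \ref{thm:decoding} actually show that, with probability exponentially close to $1$, the embedded code preserves the \emph{exact} Hamming minimum distance (respectively the exact minimum Hamming weight of a syndrome preimage), not merely one side of a threshold. This is precisely what is needed for the gap promise to be preserved through the embedding.

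First, I would reduce $GapMDP_{q,\gamma}$ to $GapRMDP_{q,\gamma}$. Given an instance $(C, t)$, sample $\ALPHA$ uniformly at random from $\f_Q^n$ with $m > 2qn$ and output $(\C(C,\ALPHA), t)$. On a YES instance ($d_H(C) \le t$), the rank weight of any $\psi_{\ALPHA}(x)$ is at most the Hamming weight of $x$, so $d_R(\C(C,\ALPHA)) \le d_H(C) \le t$ with probability $1$, and the output is a YES instance. On a NO instance ($d_H(C) > \gamma t$), Theorem \ref{thm:distance} gives $d_R(\C(C,\ALPHA)) = d_H(C) > \gamma t$ with probability $1 - 2^{-\Omega(n)}$, producing a NO instance of $GapRMDP_{q,\gamma}$. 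The reduction from $GapNCP_{q,\gamma}$ to $GapRNCP_{q,\gamma}$ is entirely analogous, cast in syndrome form: given $(\HH, \s, t)$, output $(\HH(\BETA), \s, t)$ for a random $\BETA \in \f_Q^n$ with $m > n^2$, and invoke Theorem \ref{thm:decoding}. Both are unfaithful random (UR) reductions respecting the gap promise.

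Composing each UR reduction with the hypothesized $\coRP$ algorithm for the rank version yields a randomized algorithm for the corresponding Hamming gap problem: on a YES instance it always outputs YES, and on a NO instance it outputs NO with probability at least $1/2 - \varepsilon$, where $\varepsilon = 2^{-\Omega(n)}$ bounds the reduction's failure probability. Standard amplification then places $GapMDP_{q,\gamma}$ and $GapNCP_{q,\gamma}$ in $\coRP$. Since both problems are NP-hard for every $\gamma > 1$ (\cite{CW09} and \cite{ABS97} respectively), we conclude $\NP \subset \coRP$.

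To pass from $\NP \subset \coRP$ to $\NP = \ZPP$, I would invoke the standard chain: $\coRP \subset \mathrm{BPP}$, and the self-reducibility of SAT (iteratively fix each variable using the $\mathrm{BPP}$ oracle and verify the final assignment) upgrades $\NP \subset \mathrm{BPP}$ to $\NP \subset \RP$. Combined with $\NP \subset \coRP$, this gives $\NP \subset \RP \cap \coRP = \ZPP$, and the reverse inclusion $\ZPP \subset \NP$ is immediate. The main subtlety I anticipate is the promise-problem bookkeeping: when the UR reduction fails on a Hamming NO instance, the resulting rank-metric code may have its minimum distance inside the gap $(t, \gamma t]$, violating the rank problem's promise and giving no \emph{a priori} guarantee on the rank oracle's output. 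Since this bad event has exponentially small probability, it does not affect the final conclusion, but the composition of UR reductions with $\coRP$ oracles on promise problems must be carried out with some care.
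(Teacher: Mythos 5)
Your proposal is correct and follows essentially the same route as the paper: the same random embeddings $\psi_{\ALPHA}$ and $\HH(\BETA)$, the observation that rank weight never exceeds Hamming weight (so YES instances are preserved with probability $1$) combined with Theorems \ref{thm:distance} and \ref{thm:decoding} (so NO instances are preserved with probability $1-2^{-\Omega(n)}$), giving an unfaithful random reduction from the Hamming gap problems to their rank analogues. The only difference is that the paper delegates the final step --- an NP-hard problem in $\coRP$ under UR reductions forces $\NP=\ZPP$ --- to the cited result of \cite{Jo90}, whereas you spell it out explicitly via the self-reducibility of SAT and the promise-problem bookkeeping; both are sound.
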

\begin{proof}
We use the same embedding technique as for Theorem 1.
 Since the Hamming distance is always greater or equal than the rank distance,
we obtain a Unfaithful Random (UR) reduction between the respective approximation Hamming distance
problems and rank distance problems and hence by the result of (\cite{Jo90},p.118), the result
follows.
\end{proof}

\section{Conclusion}

In this paper we proved the hardness of the minimum distance and syndrome decoding problems
for rank codes and rank distance under a randomized UR reduction. If we compare to other type of metrics
like Hamming or Euclidean distance, we see that, for the decoding problem
the reductions for codes equipped with Hamming distance and lattices with Euclidean distance are deterministic
and for minimum distance, reductions are randomized for lattices and deterministic for codes (see \cite{Mic13}
and references therein).
A worthwhile challenge would be to obtain a deterministic reduction also for rank metric. 

\begin{center}
ACKNOWLEDGEMENT
\end{center}

The first author thanks O. Ruatta for helpful discussions.

\end{document}